\providecommand{\U}[1]{\protect\rule{.1in}{.1in}}
\newtheorem{theorem}{Theorem}
\newenvironment{proof}[1][Proof]{\noindent\textbf{#1.} }{\ \rule{0.5em}{0.5em}}
\begin{document}
\preprint{UATP/1302}
\title{Some Rigorous Results Relating Nonequilibrium, Equilibrium, Calorimetrically
Measured and Residual Entropies during Cooling }
\author{P. D. Gujrati}
\email{pdg@uakron.edu}
\affiliation{Department of Physics, Department of Polymer Science, The University of Akron,
Akron, OH 44325}

\begin{abstract}
We use rigorous nonequilibrium thermodynamic arguments to establish that (i)
the nonequilibrium entropy $S(T_{0})$ of any system is bounded below by the
experimentally (calorimetrically) determined entropy $S_{\text{expt}}(T_{0})$,
(ii) $S_{\text{expt}}(T_{0})$ is bounded below by the equilibrium or
stationary state (such as the supercooled liquid) entropy $S_{\text{SCL}%
}(T_{0})$ and consequently (iii) $S(T_{0})$ cannot drop below $S_{\text{SCL}%
}(T_{0})$. It then follows that the residual entropy $S_{\text{R}}$ is bounded
below by the extrapolated $S_{\text{expt}}(0)>S_{\text{SCL}}(0)$ at absolute
zero. These results are very general and applicable to all nonequilibrium
systems regardless of how far they are from their stationary states.

\end{abstract}
\date{\today}
\maketitle

\section{Introduction}

Nonequilibrium states like glasses from supercooled liquids (SCLs)\ are
abundant in Nature, whose entropy $S$ can only be estimated by
calorimetrically measured entropy $S_{\text{expt}}$, which can then be
extrapolated to absolute zero. The extrapolated value $S_{\text{R}}$ at
absolute zero is commonly known as the \emph{residual entropy} and is normally
found to satisfy $S_{\text{R}}>0$. In practice, one considers the isobaric
entropy $S(T_{0})$ of the system as a function of the temperature $T_{0}$ of
the surrounding medium; see Fig. \ref{Fig.System}. The existence of
$S_{\text{R}}$ was first theoretically demonstrated by Pauling and Tolman
\cite{Pauling}; see also Tolman \cite{Tolman}. In addition, the existence of
the residual entropy has been demonstrated rigorously for a very general spin
model by Chow and Wu \cite{Chow}. The residual entropy for glycerol was
observed by Gibson and Giauque \cite{Giauque-Gibson} and for ice by Giauque
and Ashley \cite{Giauque}. Pauling \cite{Pauling-ice} provided the first
numerical estimate for the residual entropy for ice, which was later improved
by Nagle \cite{Nagle}. Nagle's numerical estimate has been recently verified
by simulation \cite{Isakov,Berg}. The numerical simulation carried out by
Bowles and Speedy \cite{Speedy} for glassy dimers also supports the existence
of a residual entropy. For a brief review of the history of the residual
entropy, see
\cite{Goldstein,Gutzow-Schmelzer,Nemilov,GujratiResidualentropy,Gujrati-Symmetry,Sethna,Sethna-Paper,Johari}%
. Thus, it appears that the support in favor of the residual entropy, see the
curve Glass1 in see Fig. \ref{Fig_entropyglass}, is quite strong. Its
existence also does not violate Nernst's postulate, as the latter is
applicable only to true equilibrium states with a \emph{non-degenerate} ground
state \cite{Landau,Gujrati-Nernst,Gujrati-Fluctuations}. Indeed, many exactly
solved statistical mechanical models show a non-zero entropy at absolute zero.
However, as of yet, no experiment can be performed at absolute zero to
experimentally determine the residual entropy; in all cases, some sort of
\emph{extrapolation} is required. This point should not be forgotten in the
following whenever we speak of the residual entropy. Despite the above
mentioned support for the reality of the residual entropy, it has become a
highly debated issue in the literature
\cite{Note1,Jackel,Palmer,Hemmen,Thirumalai,Kivelson,Gupta} as discussed by
these authors. The reason for the debate is that the relationship among
$S(T_{0})$, $S_{\text{expt}}(T_{0})$ and the entropy $S_{\text{SCL}}(T_{0})$
of the corresponding stationary state is not well understood, and
understanding this relationship is the main theme of this work.
\begin{figure}
[ptb]
\begin{center}
\includegraphics[
height=1.5229in,
width=3.128in
]%
{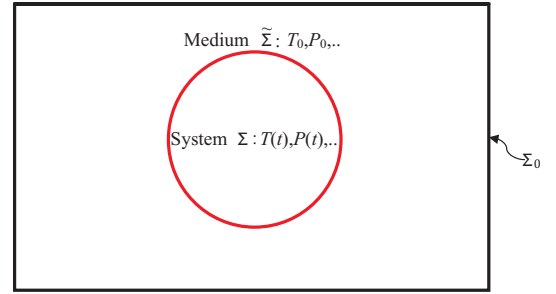}%
\caption{An isolated system $\Sigma_{0}$ consisting of the system $\Sigma$ in
a surrounding medium $\widetilde{\Sigma}$. The medium and the system are
characterized by their fields $T_{0},P_{0},...$ and $T(t),P(t),...$,
respectively, which are different when the two are out of equilibrium. }%
\label{Fig.System}%
\end{center}
\end{figure}

In the following, we will speak of "the equilibrium" state associated with a
nonequilibrium state as the stationary (time-independent) state. Depending on
the context, the equilibrium state may represent a true equilibrium state such
as a crystal or a stationary metastable state such as the supercooled liquid.
A nonequilibrium state in this work will always be taken as time-dependent.
Accordingly, $S(T_{0})$ above should be correctly expressed as  $S(T_{0},t)$,
and in some cases can be expressed as a function $S(T(t))$ of the
instantaneous temperature $T(t)$ \cite{Guj-NE-I,Guj-NE-II,Guj-NE-III} of the
system; see Fig. \ref{Fig.System}. In this work, we will not be concerned with
$T(t)$. Thus, we will simply use $S(T_{0})$ for the nonequilibrium state,
knowing well that the state continues to change with time. 

For the purpose of clarity, we will consider supercooled liquids and
associated nonequilibrium states (glasses) in the following, but the arguments
are applicable to all nonequilibrium states. The supercooled liquid undergoes
a glass transition over a transition range, see Fig. \ref{Fig_entropyglass},
over which the entropy falls rapidly with lowering temperature $T_{0}$. The
transition region is controlled by the rate of cooling so that the glass is a
nonequilibrium state \cite{Gujrati-book,Guj-NE-III}. As the irreversibility
due to the glass transition does not allow for an exact evaluation of the
entropy, it has been suggested \cite{Kivelson,Gupta} that the entropy
decreases by an amount almost equal to $S_{\text{R}}$ within the glass
transition region so that the glass (see Glass2 in Fig. \ref{Fig_entropyglass}%
, whose entropy lies below the supercooled liquid) would have a vanishing
entropy at absolute zero. It has been shown by Goldstein \cite{Goldstein} that
Glass2 results in a violation of the second law. It should be stressed that if
there is ever any \emph{conflict} between the second law \cite{note1} and any
other law in physics such as the zeroth or the third law, it is the second law
that is believed to hold in \emph{all} cases. One can also argue that to
confine the glass into a \emph{unique} basin in the energy landscape requires
\emph{microscopic} information
\cite{Note-Measurement,Gujrati-book,Gujrati-Symmetry}; hence, the particular
glass \emph{cannot} be considered in a macrostate. Oppenheim \cite{Oppenheim}
has also raised somewhat of a similar objection.

We have drawn the two entropy curves (Glass1 or Glass2) in Fig.
\ref{Fig_entropyglass} that emerge out of the entropy curve for the
equilibrated supercooled liquid for a given $\tau_{\text{obs}}$ in such a way
that Glass1 has its entropy above (so that $S_{\text{R}}\geq0$)\ and Glass2
below (so that $S_{\text{R}}\equiv0$) that of the supercooled liquid. The
entropy of Glass1 (Glass2) approaches that of the equilibrated supercooled
liquid entropy from above (below) during isothermal (fixed temperature of the
medium) relaxation; see the two downward vertical arrows for Glass1. It is the
approach to equilibrium that distinguishes the two glasses, Glass1 and Glass2.
Almost all experimental investigations leave open the possibility that Glass2
may materialize if the irreversibility is too large. Our work clarifies the
situation.
\begin{figure}
[ptb]
\begin{center}
\includegraphics[
trim=1.250626in 3.954382in 1.958635in 3.864993in,
height=2.06in,
width=3.5829in
]%
{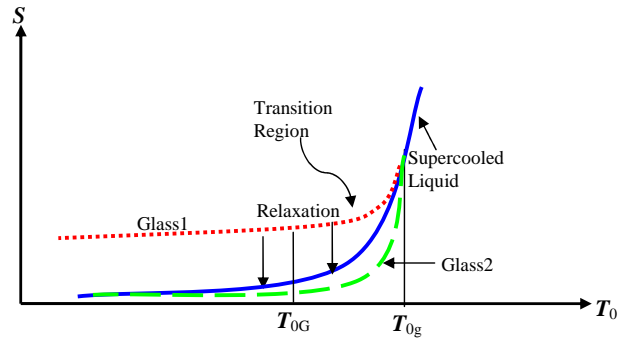}%
\caption{Schematic behavior of the entropy of the equilibrated, i.e.
stationary supercooled liquid (solid curve) and two possible
glasses\ (Glass1-dotted curve, Glass2-dashed curve) during vitrification. The
transition region between $T_{0\text{g}}$ and $T_{0\text{G}}$ has been
exaggerated to highlight the point that the glass transition is not a sharp
point.\ For all temperatures $T_{0}<T_{0\text{g}}$, any nonequilibrium state
undergoes isothermal structural relaxation in time towards the supercooled
liquid. The entropy of the supercooled liquid is shown to extrapolate to zero
per our assumption, but that of Glass1 to a non-zero value and of Glass2 to
zero at absolute zero. }%
\label{Fig_entropyglass}%
\end{center}
\end{figure}

It is abundantly clear from the above discussion that there is a need to look
at the relationship between various entropies in Fig. \ref{Fig_entropyglass}.
As is customary, we treat the supercooled liquid as an equilibrium state, even
though it not a true equilibrium state; see above. We proceed by following the
strict second law inequality $d_{\text{i}}S>0$
\cite{deGroot,Guj-NE-I,Guj-NE-II,Guj-Entropy}, see Eq.
(\ref{Second_Law_Inequality}), and use it to prove the following results
applicable to \emph{all} nonequilibrium systems, regardless of how close or
far they are from their equilibrium state:

\begin{enumerate}
\item Various entropies obey the following strict inequalities
\begin{equation}
S(T_{0})>S_{\text{expt}}(T_{0})>S_{\text{SCL}}(T_{0})\ \ \ \text{for }%
T_{0}<T_{0\text{g}},\label{GlassEntropy_Bound}%
\end{equation}
so that the entropy variation in time has a unique direction as shown by the
\emph{downward arrows} in Fig. \ref{Fig_entropyglass}. Thus, $S(T_{0})$ cannot
drop below $S_{\text{SCL}}(T_{0})$ (such as Glass2 in Fig.
\ref{Fig_entropyglass}) without violating of the second law \cite{note1}. 

\item The experimentally observed non-zero entropy at absolute zero in a
vitrification process is a \emph{strict lower bound of the residual entropy}
of any system:%
\begin{equation}
S_{\text{R}}\equiv S(0)>S_{\text{expt}}(0)>S_{\text{SCL}}%
(0).\label{ResidualEntropy_Bound}%
\end{equation}

\end{enumerate}

The Eq. (\ref{GlassEntropy_Bound}) is consistent with Glass1 but not with
Glass2. All experiments on or exact/approximate computations for
nonequilibrium systems \emph{must} obey the strict inequalities\ in Eqs.
(\ref{ResidualEntropy_Bound}-\ref{GlassEntropy_Bound}) without exception. This
is the meaning behind the usage of "... rigorous ..." in the title. The actual
values of the entropy are not relevant for the aim of this work, which is to
find the relationship among different entropies under vitrification. Because
of the possibility that the systems may be far away from equilibrium such as
in a fast quench, where the irreversible contributions may not be neglected,
our results go beyond the previous calorimetric evidence
\cite{Jackel,Gutzow-Schmelzer,Nemilov,Johari}. The systems we are interested
in include glasses and imperfect crystals as special cases. However, to be
specific, we will only consider glasses below.

\section{Entropy Bounds during Vitrification}

The vitrification process we consider is carried out at some cooling rate as
follows. The temperature of the medium is isobarically changed by some small
but fixed $\Delta T_{0}<0$ from the current value to the new value, and we
wait for (not necessarily fixed) time $\tau_{\text{obs}}$ at the new
temperature to make an instantaneous measurement on the system before changing
the temperature again. At some temperature $T_{0\text{g}}$, see Fig.
\ref{Fig_entropyglass}, the relaxation time $\tau_{\text{relax}}$, which
continuously increases as the temperature is lowered, becomes equal to
$\tau_{\text{obs}}$. Just below $T_{0\text{g}}$, the structures are not yet
frozen; they "freeze" at a lower temperature $T_{0\text{G}}$ (not too far from
$T_{0\text{g}})$ to form an amorphous solid with a viscosity close to
$10^{13}$ poise. This solid is identified as a \emph{glass}. The location of
both temperatures depends on the rate of cooling, i.e. on $\tau_{\text{obs}}$.
Over the glass transition region between $T_{0\text{G}}$ and $T_{0\text{g}}%
$\ in Fig. \ref{Fig_entropyglass}, the system gradually turns from an
equilibrium supercooled liquid at or above $T_{0\text{g}}$ into a glass at or
below $T_{0\text{G}}$ \cite{Gujrati-book,Guj-NE-III,Nemilov-Book}. We overlook
the possibility of the supercooled liquid ending in a spinodal
\cite{Gujrati-spinodal}. It is commonly believed that $S_{\text{SCL}}(0)$ will
vanish at absolute zero ($S_{\text{SCL}}(0)\equiv0$), as shown in the figure.
However, it should be emphasized that the actual value of $S_{\text{SCL}}(0)$
has no relevance for the theorems below.

We will only consider isobaric cooling (we will not explicitly exhibit the
pressure in this section), which is the most important situation for glasses.
The process is carried out along some path from an initial state A at
temperature $T_{0}$ in the supercooled liquid state which is still higher than
$T_{0\text{g}}$ to the state A$_{0}$ at absolute zero. The state A$_{0}$
depends on the path A$\rightarrow$A$_{0}$, which is implicit in the following.
The change $dS$ between two neighboring points along such a path is
$dS=d_{\text{e}}S+d_{\text{i}}S$ in modern notation
\cite{Donder,deGroot,Prigogine,Guj-NE-I,Guj-NE-II,Guj-Entropy}. The component
\begin{equation}
d_{\text{e}}S(t)=d_{\text{e}}Q(t)/T_{0}\equiv C_{P}dT_{0}/T_{0}%
\label{Heat_Capacity_Relation}%
\end{equation}
represents the reversible entropy exchange with the medium in terms of the
exchange heat $d_{\text{e}}Q(t)$ (in keeping with the modern notation) added
to the glass by the medium at time $t$ to the medium at $T_{0}$ and the heat
capacity $C_{P}$. It also represents the \emph{calorimetrically} determined
change in the entropy in any process. The component
\begin{equation}
d_{\text{i}}S>0\label{Second_Law_Inequality}%
\end{equation}
represents the irreversible entropy generation within the system in the
irreversible process, and contains, in addition to the contribution from the
irreversible heat transfer with the medium, contributions from all sorts of
viscous dissipation going on \emph{within} the system and normally require the
use of internal variables \cite{Donder,deGroot,Prigogine,Guj-NE-I,Guj-NE-II}.
The equality in Eq. (\ref{Second_Law_Inequality}) holds for a reversible
process, which we will no longer consider unless stated otherwise. A
discontinuous change in the entropy is ruled out from the continuity of the
Gibbs free energy $G$ and the enthalpy $H$ in vitrification proved elsewhere
\cite{Guj-NE-I}. Thus, we only consider a continuous change in the entropy as
shown by the two glass curves in Fig. \ref{Fig_entropyglass}.

\begin{theorem}
\label{Theorem_Lower_Bound}The experimentally observed (extrapolated) non-zero
entropy at absolute zero in a vitrification process is a \emph{strict lower
bound of the residual entropy} of any system:%
\[
S_{\text{R}}\equiv S(0)>S_{\text{expt}}(0).
\]

\end{theorem}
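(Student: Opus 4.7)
The plan is to exploit the modern decomposition $dS = d_{\text{e}}S + d_{\text{i}}S$ along the cooling path $\mathrm{A}\to\mathrm{A}_0$ and integrate, using the strict positivity in Eq.~(\ref{Second_Law_Inequality}). By Eq.~(\ref{Heat_Capacity_Relation}), the exchange component $d_{\text{e}}S = C_P\,dT_0/T_0$ is precisely what a calorimeter records, so along any segment of the path the calorimetric change in entropy is
\begin{equation}
dS_{\text{expt}} = d_{\text{e}}S.
\end{equation}
Subtracting and using Eq.~(\ref{Second_Law_Inequality}),
\begin{equation}
d\bigl(S - S_{\text{expt}}\bigr) = d_{\text{i}}S > 0
\end{equation}
at every step of the vitrification, since the process is irreversible.

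First I would fix the initial state $\mathrm{A}$ at some $T_0 > T_{0\text{g}}$ inside the equilibrated supercooled liquid. There, the system is in its stationary state, so its nonequilibrium entropy $S(T_0)$ coincides with the entropy obtained by calorimetric integration down from a common reference, i.e., $S(T_0) = S_{\text{expt}}(T_0)$. This anchoring is essential: without it one can only compare entropy \emph{differences}, not absolute values. Next I would integrate $d_{\text{i}}S > 0$ along the entire irreversible path from $\mathrm{A}$ to the terminal state $\mathrm{A}_0$ reached by extrapolation to $T_0 = 0$, obtaining
\begin{equation}
S(0) - S_{\text{expt}}(0) = \bigl[S(T_0) - S_{\text{expt}}(T_0)\bigr] + \int_{\mathrm{A}}^{\mathrm{A}_0} d_{\text{i}}S > 0,
\end{equation}
since the bracketed term vanishes by the anchoring and the integral is strictly positive because irreversibility is present on a set of positive measure during the glass transition region between $T_{0\text{g}}$ and $T_{0\text{G}}$ (and generally below). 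Rearranging gives $S_{\text{R}}\equiv S(0) > S_{\text{expt}}(0)$.

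I expect the main technical obstacle to be the justification that the integrated $d_{\text{i}}S$ remains strictly positive in the limit $T_0 \to 0$, i.e., that the irreversible contributions do not conspire to vanish upon extrapolation. This should follow because the irreversibility accrued in the finite-temperature transition region $[T_{0\text{G}}, T_{0\text{g}}]$ is already strictly positive and is locked in once the structures freeze below $T_{0\text{G}}$; extrapolating $S_{\text{expt}}$ to absolute zero does not undo this accumulated gap. A secondary subtlety is that $\mathrm{A}_0$ depends on the path (the cooling protocol), but the inequality holds path by path, which is all the theorem asserts. The continuity of $S$ during vitrification, noted just before the theorem, ensures that no discontinuous jump can erase the gap at the terminal point.
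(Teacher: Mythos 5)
Your proposal is correct and follows essentially the same route as the paper: integrate $dS=d_{\text{e}}S+d_{\text{i}}S$ along A$\rightarrow$A$_{0}$, identify $S_{\text{expt}}(0)$ with $S(T_{0})+\int_{T_{0}}^{0}C_{P}\,dT_{0}/T_{0}$ (your ``anchoring'' at the equilibrated supercooled liquid is exactly the paper's implicit definition in Eq.~(\ref{Residual_Entropy_determination})), and invoke the strict positivity of $\int d_{\text{i}}S$ from Eq.~(\ref{Second_Law_Inequality}). The only detail the paper states that you omit is the assumption of no latent heat along the path, which is minor.
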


\begin{proof}
We have along A$\rightarrow$A$_{0}$%
\begin{equation}
S(0)=S(T_{0})+%
{\textstyle\int\limits_{\text{A}}^{\text{A}_{0}}}
d_{\text{e}}S+%
{\textstyle\int\limits_{\text{A}}^{\text{A}_{0}}}
d_{\text{i}}S,\label{General_Entropy_Calculation}%
\end{equation}
where we have assumed that there is no latent heat in the vitrification
process. Since the second integral is always \emph{positive}, and since the
residual entropy $S_{\text{R}}$ is, by definition, the entropy $S(0)$ at
absolute zero, we obtain the important result%
\begin{equation}
S_{\text{R}}\equiv S(0)>S_{\text{expt}}(0)\equiv S(T_{0})+%
{\textstyle\int\limits_{T_{0}}^{0}}
C_{P}dT_{0}/T_{0}.\label{Residual_Entropy_determination}%
\end{equation}
This proves Theorem \ref{Theorem_Lower_Bound}. The integral represents the
calorimetric contribution. 
\end{proof}

The strict forward inequality above clearly establishes that the residual
entropy at absolute zero must be strictly larger than $S_{\text{expt}}(0)$ in
any nonequilibrium process.

\begin{theorem}
\label{Theorem_Lower_Bound_SCL}The calorimetrically measured (extrapolated)
entropy during processes that occur when $\tau_{\text{obs}}<\tau
_{\text{relax}}(T_{0})$ for any $T_{0}<T_{0\text{g}}$ is larger than the
supercooled liquid entropy at absolutely zero
\[
S_{\text{expt}}(0)>S_{\text{SCL}}(0).
\]

\end{theorem}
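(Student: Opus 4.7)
The plan is to express both $S_{\text{expt}}(0)$ and $S_{\text{SCL}}(0)$ as calorimetric integrals from a common initial state A sitting on the SCL at some temperature $T_0 > T_{0\text{g}}$, where $S(T_0) = S_{\text{SCL}}(T_0)$ and $H(T_0) = H_{\text{SCL}}(T_0)$. Applying Eq.~(\ref{Residual_Entropy_determination}) to the actual vitrification path A $\to$ A$_0$ gives
\begin{equation*}
S_{\text{expt}}(0) = S_{\text{SCL}}(T_0) + \int_{T_0}^{0}\frac{C_P\,dT_0}{T_0},
\end{equation*}
with $C_P$ the operational heat capacity along the actual path. The analogous expression for the hypothetical reversible SCL cooling, for which $d_{\text{i}}S = 0$, reads
\begin{equation*}
S_{\text{SCL}}(0) = S_{\text{SCL}}(T_0) + \int_{T_0}^{0}\frac{C_P^{\text{SCL}}\,dT_0}{T_0}.
\end{equation*}

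Subtracting the two and noting that $C_P = C_P^{\text{SCL}}$ above $T_{0\text{g}}$ (the system is still in SCL equilibrium there) collapses the comparison to
\begin{equation*}
S_{\text{expt}}(0) - S_{\text{SCL}}(0) = \int_{T_{0\text{g}}}^{0}\frac{(C_P - C_P^{\text{SCL}})\,dT_0}{T_0}.
\end{equation*}
I would then analyze the sign via the excess enthalpy $\delta H(T_0)\equiv H(T_0)-H_{\text{SCL}}(T_0)$, which vanishes at and above $T_{0\text{g}}$ and satisfies $d\delta H/dT_0=C_P-C_P^{\text{SCL}}$. Under the hypothesis $\tau_{\text{obs}}<\tau_{\text{relax}}(T_0)$ for every $T_0<T_{0\text{g}}$, the configurational modes cannot relax with the medium; they are frozen out of the measured $C_P$ while still being counted in $C_P^{\text{SCL}}$. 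Equivalently $\delta H$ strictly increases as the system is cooled through the glass window and reaches $\delta H(0)>0$. Hence $C_P-C_P^{\text{SCL}}<0$ throughout $(0,T_{0\text{g}})$, and with $T_0>0$ and $dT_0<0$ along the cooling direction the integrand is positive, so the integral is strictly positive and $S_{\text{expt}}(0)>S_{\text{SCL}}(0)$.

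The main obstacle is the middle step: converting the kinetic hypothesis $\tau_{\text{obs}}<\tau_{\text{relax}}$ into the strict thermodynamic inequality $C_P<C_P^{\text{SCL}}$ (equivalently, strict monotonicity of $\delta H$) on $(0,T_{0\text{g}})$. Physically this is the familiar statement that the slow configurational degrees of freedom stay frozen on the cooling time scale and cannot unload their excess enthalpy to the reservoir; rigorously, it can be made precise either through an internal-variable description with strictly positive affinities inside the glass window, or by applying $d_{\text{i}}S>0$ to each infinitesimal cooling step and relating the entropy production to the mismatch between the actual and the SCL heat capacities. Once this pointwise inequality is secured, the calorimetric identity above delivers the theorem immediately.
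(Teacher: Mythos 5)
Your proposal is correct and follows essentially the same route as the paper: both reduce the theorem to the pointwise inequality $C_{P}<C_{P\text{,eq}}$ on $(0,T_{0\text{g}})$ (with equality at and above $T_{0\text{g}}$) and then compare $\int_{T_{0}}^{0}C_{P}\,dT_{0}/T_{0}$ with $\int_{T_{0}}^{0}C_{P\text{,eq}}\,dT_{0}/T_{0}$. The ``main obstacle'' you flag is handled in the paper by precisely the physical assertion you describe, written as the truncated heat-flux comparison $\int_{0}^{\tau_{\text{obs}}}\bigl\vert \dot{Q}_{\text{e}}\bigr\vert \,dt<\int_{0}^{\tau_{\text{relax}}(T_{0})}\bigl\vert \dot{Q}\bigr\vert \,dt$, which at constant pressure is the heat/enthalpy form of your statement that the deficit $\delta H$ grows monotonically through the glass window, so your argument neither adds nor omits anything essential relative to the paper's.
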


\begin{proof}
Let $\overset{\cdot}{Q_{\text{e}}}(t)\equiv d_{\text{e}}Q(t)/dt$ be the rate
of net heat loss by the system. For each temperature interval $dT_{0}<0$ below
$T_{0\text{g}}$, we have
\begin{align*}
\left\vert d_{\text{e}}Q\right\vert  & \equiv C_{P}\left\vert dT_{0}%
\right\vert =%
{\textstyle\int\limits_{0}^{\tau_{\text{obs}}}}
\left\vert \overset{\cdot}{Q}_{\text{e}}\right\vert dt<\left\vert
dQ\right\vert _{\text{eq}}(T_{0})\\
& \equiv%
{\textstyle\int\limits_{0}^{\tau_{\text{relax}}(T_{0})}}
\left\vert \overset{\cdot}{Q}\right\vert dt,\ \ \ \ \ \ T_{0}<T_{0\text{g}}%
\end{align*}
where $\left\vert dQ\right\vert _{\text{eq}}(T_{0})>0$ denotes the net heat
loss by the system to come to equilibrium, i.e. become supercooled liquid
during cooling at $T_{0}$. For $T_{0}\geq T_{0\text{g}}$, $d_{\text{e}}Q\equiv
d_{\text{e}}Q_{\text{eq}}(T_{0})\equiv C_{P\text{,eq}}dT_{0}$. Thus,%
\[%
{\textstyle\int\limits_{T_{0}}^{0}}
C_{P}dT_{0}/T_{0}>%
{\textstyle\int\limits_{T_{0}}^{0}}
C_{P\text{,eq}}dT_{0}/T_{0}.
\]
We thus conclude that
\begin{equation}
S_{\text{expt}}(0)>S_{\text{SCL}}(0).\label{Entropy_bound_at_0}%
\end{equation}

This proves Theorem \ref{Theorem_Lower_Bound_SCL}.
\end{proof}

The strict inequalities above are the result of glass being a nonequilibrium
state. We have now verified the second statement in the Introduction.

The difference $S_{\text{R}}-$ $S_{\text{expt}}(0)$ would be larger, more
irreversible the process is. The quantity $S_{\text{expt}}(0)$ can be
determined calorimetrically by performing a cooling experiment. We take
$T_{0}$ to be the melting temperature $T_{0\text{M}}$, and uniquely determine
the entropy of the supercooled liquid at $T_{0\text{M}}$ by adding the entropy
of melting to the crystal entropy $S_{\text{CR}}(T_{0\text{M}})$ at
$T_{0\text{M}}$. The latter is obtained in a unique manner by integration
along a reversible path from $T_{0}=0$ to $T_{0}=T_{0\text{M}}$:
\[
S_{\text{CR}}(T_{0\text{M}})=S_{\text{CR}}(0)+%
{\textstyle\int\limits_{0}^{T_{0\text{M}}}}
C_{P\text{,CR}}dT_{0}/T_{0},
\]
here, $S_{\text{CR}}(0)$ is the entropy of the crystal at absolute zero, which
is traditionally taken to be zero in accordance with the third law, and
$C_{P\text{,CR}}(T_{0})$ is the isobaric heat capacity of the crystal. This
then uniquely determines the entropy of the liquid to be used in the right
hand side in Eq. (\ref{Residual_Entropy_determination}). We will assume that
$S_{\text{CR}}(0)=0$. Thus, the experimental determination of $S_{\text{expt}%
}(0)$ is required to give the \emph{lower bound} to the residual entropy in
Eq. (\ref{ResidualEntropy_Bound}). Experiment evidence for a non-zero value of
$S_{\text{expt}}(0)$ is abundant as discussed by several authors
\cite{Giauque,Giauque-Gibson,Jackel,Gutzow-Schmelzer,Nemilov,Goldstein}; a
textbook \cite{Nemilov-Book} also discusses this issue. Goldstein
\cite{Goldstein} gives a value of $S_{\text{R}}\simeq15.1$ J/K mol for
\textit{o-}terphenyl from the value of its entropy at $T_{0}=2$ K. We have
given above a mathematical justification of $S_{\text{expt}}(0)>0$\ in Eq.
(\ref{Entropy_bound_at_0}). The strict inequality proves immediately that the
residual entropy \emph{cannot} vanish for glasses, which justifies the curve
Glass1 in Fig. \ref{Fig_entropyglass}.

The inequality in Eq. (\ref{Residual_Entropy_determination}) takes into
account any amount of irreversibility during vitrification; it is no longer
limited to only small contributions of the order of $2\%$ considered by
several others
\cite{Johari,Nemilov-Book,GujratiResidualentropy,Goldstein,Betsul}, which
makes our derivation very general.

By considering the state A$_{0}$ above to be a state A$_{0}$\ of the glass in
a medium at some arbitrary temperature $T_{0}^{\prime}$ below $T_{0\text{g}}$,
we can get a generalization of Eq. (\ref{Residual_Entropy_determination}):%
\begin{equation}
S(T_{0}^{\prime})>S_{\text{expt}}(T_{0}^{\prime})\equiv S(T_{0})+%
{\textstyle\int\limits_{T_{0}}^{T_{0}^{\prime}}}
C_{P}dT_{0}/T_{0}.\label{Entropy_determination}%
\end{equation}
We again wish to remind the reader that all quantities depend on the path
A$\rightarrow$A$_{0}$, which we have not exhibited. By replacing $T_{0}$\ by
the melting temperature $T_{0\text{M}}$ and $T_{0}^{\prime}$\ by $T_{0}$, and
adding the entropy $\widetilde{S}(T_{0\text{M}})$ of the medium on both sides
in the above inequality, and rearranging terms, we obtain (with $S_{\text{L}%
}(T_{0\text{M}})=S_{\text{SCL}}(T_{0\text{M}})$ for the liquid)%
\begin{equation}
S_{\text{L}}(T_{0\text{M}})+\widetilde{S}(T_{0\text{M}})<S(T_{0}%
)+\widetilde{S}(T_{0\text{M}})-%
{\textstyle\int\limits_{T_{0\text{M}}}^{T_{0}}}
C_{P}dT_{0}/T_{0},\label{Setna_Inequality}%
\end{equation}
where we have also included the equality for a reversible process. This
provides us with an independent derivation of the inequality given by Sethna
and coworker \cite{Sethna-Paper}.

It is also clear from the derivation of Eq. (\ref{Entropy_bound_at_0}) that
the inequality can be generalized to any temperature $T_{0}<T_{0\text{g}}$
with the result%
\begin{equation}
S_{\text{expt}}(T_{0})>S_{\text{SCL}}(T_{0}),\label{Entropy_bound}%
\end{equation}
with $S_{\text{expt}}(T_{0})\rightarrow S_{\text{SCL}}(T_{0})$ as
$T_{0}\rightarrow T_{0\text{g}}$ from below. Thus, $S_{\text{expt}}(T_{0})$
appears to have a form similar to that of Glass1 in Fig.
\ref{Fig_entropyglass} but strictly lying below it. We have now verified the
first statement in the Introduction.

While we have only demonstrated the forward inequalities, the excess
$S_{\text{R}}-S_{\text{expt}}(0)$ can be computed in nonequilibrium
thermodynamics \cite{Donder,deGroot,Prigogine,Guj-NE-I,Guj-NE-II}, which
provides a clear prescription for calculating the irreversible entropy
generation. We do not do this here as we are only interested in general
results, while the calculation of irreversible entropy generation will, of
course, be system-dependent and will require detailed information. Gutzow and
Schmelzer\cite{Gutzow-Schmelzer} provide such a procedure with a single
internal variable but under the assumption of equal temperature and pressure
for the glass and the medium. However, while they comment that $d_{\text{i}%
}S\geq0$ whose evaluation requires system-dependent properties, their main
interest is to only show that it is negligible compared to $d_{\text{e}}S$.

We have proved Theorems \ref{Theorem_Lower_Bound} and
\ref{Theorem_Lower_Bound_SCL}\ by considering only the system without paying
any attention to the medium. For Theorem \ref{Theorem_Lower_Bound}, we require
the second law, i.e. Eq. (\ref{Second_Law_Inequality}). This is also true of
Eq. (\ref{Entropy_determination}). The proof of Theorem
\ref{Theorem_Lower_Bound_SCL} requires the constraint $\tau_{\text{obs}}%
<\tau_{\text{relax}}(T_{0})$ for any $T_{0}<T_{0\text{g}}$, which leads to a
nonequilibrium state. The same is also true of Eq. (\ref{Entropy_bound}).

\section{Conclusions}

We have considered the role of irreversible entropy generation during isobaric
vitrification to rigorously justify the two statements in the Introduction.
They are valid regardless of how far the system is out of equilibrium. Thus,
our results are very general and are not restricted by the small amount of
irreversibility that is normally considered in the literature. The first
statement shows that the instantaneous entropy $S(T_{0},t)$ must always be
higher than $S_{\text{expt}}(T_{0})$, which in turn must always be higher than
$S_{\text{SCL}}(T_{0})$ of the equilibrated supercooled entropy. The second
statement shows that the extrapolation of the calorimetrically measured
entropy to absolute zero forms a strict lower bound to the residual entropy
$S_{\text{R}}$. As the former is usually positive, this proves that the
residual entropy has to be at least as large as this value. From the first
statement, it also follows that Glass2 is not realistic.

The statements follow from considering the thermodynamic entropy that appears
in the second law, and their validity is not affected by which equivalent
statistical definition of entropy one may wish to use for the thermodynamic
entropy, an issue that has been investigated by us recently \cite{Guj-Entropy}%
.

\end{document}